\documentclass[conference]{IEEEtran}
\usepackage[T1]{fontenc}
\usepackage[utf8]{inputenc}
\usepackage{microtype}
\usepackage{graphicx}
\usepackage{xcolor}
\usepackage{booktabs}
\usepackage{array}
\usepackage{amsmath}
\usepackage{listings}
\usepackage[numbers,sort]{natbib}
\usepackage{caption}
\usepackage{enumitem}
\usepackage{url}
\usepackage[hidelinks]{hyperref}
\usepackage{tikz}
\usetikzlibrary{arrows.meta,positioning,calc}
\usepackage{pgfplots}
\pgfplotsset{compat=1.18}
\usepgfplotslibrary{groupplots}

\definecolor{assaysurface}{HTML}{F7F7F5}
\definecolor{assayblue}{HTML}{2A78D6}
\definecolor{assayorange}{HTML}{EB6834}
\definecolor{assayaqua}{HTML}{1BAF7A}
\definecolor{assaymuted}{HTML}{52514E}
\definecolor{assaygood}{HTML}{1E7A3C}
\definecolor{assaybad}{HTML}{C62828}
\definecolor{assaywarn}{HTML}{B8860B}

\tikzset{
  mod/.style={circle, draw, minimum size=6mm, inner sep=0pt, font=\scriptsize\bfseries, line width=0.7pt},
  fresh/.style={mod, fill=assayblue!12, draw=assayblue, text=black},
  stalemod/.style={mod, fill=assayorange!18, draw=assayorange, text=black},
  changed/.style={mod, fill=assayorange, draw=assayorange, text=white},
  dep/.style={-{Stealth[length=1.8mm]}, draw=assaymuted, line width=0.6pt},
  tag/.style={rectangle, rounded corners=1.5pt, draw, inner sep=2.2pt, font=\tiny, align=center, line width=0.5pt},
  tagok/.style={tag, fill=assaygood!7, draw=assaygood, text=assaygood},
  tagbad/.style={tag, fill=assaybad!7, draw=assaybad, text=assaybad},
  state/.style={rectangle, rounded corners=2pt, draw, minimum width=16mm, minimum height=5.6mm, inner sep=2pt, font=\scriptsize\bfseries, align=center, line width=0.7pt},
  tr/.style={-{Stealth[length=1.8mm]}, line width=0.7pt},
  trlab/.style={font=\tiny, text=assaymuted, inner sep=1pt, fill=white},
}

\newtheorem{proposition}{Proposition}
\newtheorem{definition}{Definition}
\newenvironment{proof}[1][Proof]{\par\noindent\textit{#1:}\ \ignorespaces}{\hfill\rule{0.45em}{0.45em}\par}
\newcommand{\assay}{\textsc{Assay}}
\newcommand{\cone}{\operatorname{cone}}
\newcommand{\BR}{\operatorname{BR}}
\newcommand{\hash}{\mathsf{h}}
\newcommand{\chash}{\hat{\mathsf{h}}}
\newcommand{\Mods}{\mathcal{M}}
\newcommand{\Deps}{\mathcal{D}}
\setlist{nosep, leftmargin=1.4em}
\makeatletter
\renewenvironment{abstract}{\normalfont
  \@IEEEtweakunitybaselinestretch{1}
  \bfseries\itshape\footnotesize
  \noindent Abstract:\ \ignorespaces}{\par}
\renewenvironment{IEEEkeywords}{\normalfont
  \@IEEEtweakunitybaselinestretch{1}
  \bfseries\itshape\footnotesize
  \noindent Index Terms:\ \ignorespaces}{\par}
\makeatother

\newcommand{\EstimatorName}{tiktoken/cl100k\_base}

\newcommand{\EOneExpressModules}{35}

\newcommand{\EOneFlaskModules}{16}

\newcommand{\EOneGinModules}{7}

\newcommand{\EOneAxumModules}{90}

\newcommand{\EOneFastapiIndexModulesOnly}{51,842}

\newcommand{\EOneFastapiModules}{195}

\newcommand{\EOneAssayModules}{5}

\newcommand{\EOneMinXFull}{216}
\newcommand{\EOneMaxXFull}{1225}
\newcommand{\EOneMinXExpl}{14}
\newcommand{\EOneMaxXExpl}{114}
\newcommand{\EOneMinBrief}{592}
\newcommand{\EOneMaxBrief}{635}

\newcommand{\ETwoFastapiCold}{508}
\newcommand{\ETwoFastapiWarmEdit}{106}

\newcommand{\ETwoFastapiFiles}{1142}

\newcommand{\ETwoMaxSpeedupNoop}{5}
\newcommand{\ETwoMinSpeedupNoop}{2}

\newcommand{\EThreeTrials}{100}
\newcommand{\EThreeMinConeCost}{7.9\%}
\newcommand{\EThreeMaxConeCost}{81.9\%}
\newcommand{\EThreeMinSelfMissed}{23\%}
\newcommand{\EThreeMaxSelfMissed}{68\%}
\newcommand{\EFourScenarios}{12}
\newcommand{\EFourBlocked}{9}
\newcommand{\EFourPassed}{3}
\newcommand{\EFourAllExpected}{all}

\newcommand{\EFiveAxumMedian}{1.1\%}
\newcommand{\EFiveAxumPninety}{91.1\%}

\newcommand{\EFiveAxumTopModule}{axum-macros/src}
\newcommand{\EFiveAxumTopBr}{84}

\newcommand{\EFiveFastapiMedian}{0.5\%}

\newcommand{\EFiveMinSingle}{0\%}
\newcommand{\EFiveMaxSingle}{81\%}
\newcommand{\LocCore}{2,313}

\newcommand{\NumTests}{30}

\begin{document}

\title{Assay: Claims That Decay With the Code\\
\large Content-Addressed Evidence Graphs for Accountable AI-Assisted Software Delivery}

\author{\IEEEauthorblockN{Om Shankar Tiwari}
\IEEEauthorblockA{Applied AI Technical Lead, Google\\California, United States\\\url{https://orcid.org/0009-0001-2247-0383}}
\and
\IEEEauthorblockN{Tangi Vass}
\IEEEauthorblockA{Co-Founder \& CPTO, OMNI3ai\\Lyon, France\\\url{https://orcid.org/0009-0009-0414-6190}}
\and
\IEEEauthorblockN{Gagan Deep Singh}
\IEEEauthorblockA{Founder, Glinr Studios and theSVG.org\\Texas, United States\\\url{https://orcid.org/0009-0002-0579-6514}}}

\maketitle
\setlength{\abovedisplayskip}{4pt plus 1pt}\setlength{\belowdisplayskip}{4pt plus 1pt}
\setlength{\abovedisplayshortskip}{2pt}\setlength{\belowdisplayshortskip}{2pt}

\begin{abstract}
AI coding agents fail in two coupled ways. They spend most of their context window rediscovering where things live, and they assert success without evidence when the work gets hard. Repository indexes address the first with cheap context, and orchestration frameworks with adversarial review address the second with accountability. Both describe the same object, the structure of the codebase, at two timescales: what is true of the code now, and what was verified to be true, at which revision, by whom. \assay{} makes that observation operational. Every claim an agent makes (\emph{tests pass}, \emph{no secrets}, \emph{behavior preserved}) is bound to the Merkle hash of the dependency cone of the code it covers, so the claim is stale exactly when that code or anything it depends on changes. We show the binding is sound and minimal, and that the blast radius of a change is precisely the set of claims it invalidates. On the graph we place a risk-proportional evidence obligation, a bounded review protocol with separation of duties, and a merge gate that consults no model: coverage, freshness, signatures, exit codes, plausibility, evidence monotonicity (the mechanical form of ``do not delete the failing test''), and review status. \assay{} is a dependency-free Python tool with an MCP server. On five public repositories a 600-token brief costs \EOneMinXExpl$\times$ to \EOneMaxXExpl$\times$ less than an exploration proxy, warm rebuilds are up to \ETwoMaxSpeedupNoop$\times$ faster than cold ones, cone binding re-verifies \EThreeMinConeCost{} to \EThreeMaxConeCost{} of claims where repository binding re-verifies all of them while per-module binding misses \EThreeMinSelfMissed{} to \EThreeMaxSelfMissed{} of required invalidations, and the gate blocks \EFourBlocked{} of \EFourBlocked{} scripted adversarial behaviours while admitting the honest ones. Every number in this paper is generated by the released scripts.
\end{abstract}

\begin{IEEEkeywords}
AI coding agents, code provenance, Merkle trees, dependency graphs, code review, software supply chain
\end{IEEEkeywords}

\section{Introduction}

An agent that edits code has to answer two questions before it can be trusted: where does this change go, and how do we know it worked? Both are questions about the structure of the repository. The first is about the structure now. The second is about the structure at the moment a check ran, and whether it has drifted since.

Current tooling answers them with two disconnected artifacts. Context tools produce maps. Aider's repository map~\citep{aider}, packers such as Repomix~\citep{repomix}, and committed indexes such as Stacklit~\citep{stacklit} compress a repository into a few hundred to a few thousand tokens so an agent stops reading files to learn what they contain. Governance frameworks produce process state. Liza~\citep{liza} is the most disciplined example we know. It pairs every doer with an independent reviewer, makes verdicts binding, and moves the invariants that matter most (no self-approval, no forbidden transitions, no merge without review) out of the prompt and into compiled code, on the grounds that instructions bend under pressure and code does not.

Each lineage stops where the other starts. A map knows that module $A$ depends on module $B$, but not that the tests covering $A$ last ran before $B$ changed. A task board knows a reviewer approved task 17 at commit \texttt{9b3c}, but not which modules the approval is about. A later change to an unrelated module and a change to something $A$ depends on look identical to it: both are ``after the approval''. That forces a bad choice. Either every accepted verdict stays valid until someone notices otherwise, or every change invalidates everything and review starts over.

A claim about code is a claim about content, and content is content-addressable. Key a claim by the hash of the code it is about and ``has this gone stale?'' becomes a hash comparison rather than a judgment. The subtlety is which hash. Hashing only the named module is unsound, because a test result for $A$ depends on $B$. Hashing the whole repository is sound but useless, because every edit invalidates every claim. The right key is the hash of the dependency cone of the subject, computed as a Merkle hash over the condensation of the dependency graph. Under this key staleness is sound and minimal, and the blast radius of a change, the modules whose cone hash moved, is by construction the set of claims it invalidated. Context and accountability become one computation over one graph.

This paper contributes:
\begin{itemize}
  \item \textbf{A model.} Content-addressed modules, cone hashes over the condensation DAG, claims bound to cone hashes, and propositions establishing soundness, minimality, and the identity between blast radius and staleness frontier. A ledger of such claims is a build cache for assertions in the sense of Mokhov et al.~\citep{mokhov2018build}.
  \item \textbf{A protocol.} Risk-proportional evidence obligations, a bounded review cycle with separation of duties adapted from Liza and applied per claim, and a merge gate with eight mechanical checks including evidence monotonicity.
  \item \textbf{A system.} \assay{}, a zero-dependency implementation with an incremental Merkle-skipped indexer for seven language families, a 600-token situation brief, a signed append-only ledger, a CLI, git hooks, and an MCP server.
  \item \textbf{A mechanical evaluation.} Five scripted experiments on five public repositories and on \assay{} itself, with no model in the loop, so every number reproduces to the digit.
\end{itemize}
The implementation, experiments, results, and paper source are public.\footnote{\url{https://github.com/OmShiv/assay-research}}

The paper does not measure how much better language-model agents perform with \assay{} than without. That study needs models in the loop and is future work. The claim here is narrower and more durable: the floor that holds no matter what the model decides can be made cheap and precise, and the structure that provides it also provides the agent's context.

\section{Background}

\subsection{Context tools}

The cost of an agent's first ten minutes in a repository is documented by the tools built to reduce it. Stacklit's comparison reports that packers such as Repomix produce 180k to 400k tokens for repositories of 15k to 37k lines, while its structured index produces a few hundred~\citep{stacklit}. Aider's map ranks symbols by a reference graph and fits them into a budget~\citep{aider}. These tools share a shape: parse, usually with tree-sitter~\citep{treesitter}, extract exports and imports, group into modules, rank by connectivity or recency, emit a compact artifact. Stacklit adds two ideas we keep. The artifact is committed, so every agent and human reads the same map, and regeneration is Merkle-skipped, so a git hook keeps it fresh at negligible cost. It exposes the map through the Model Context Protocol~\citep{mcp}. What such maps do not carry is any notion of what has been verified. When an agent claims the tests pass, the map has nothing to say about whether the claim survives the next commit.

\subsection{Governance frameworks}

Multi-agent frameworks such as MetaGPT~\citep{hong2024metagpt}, ChatDev~\citep{qian2024chatdev}, AutoGen~\citep{wu2023autogen}, and CrewAI~\citep{crewai} assign roles and route messages. Specification-first methods such as Spec Kit~\citep{speckit} and BMAD~\citep{bmad} make the artifacts explicit. Failure taxonomies show why that is not enough. Cemri et al.~\citep{cemri2025mast} catalog specification drift, inter-agent misalignment, and weak verification, and the sycophancy and reward-tampering literature~\citep{sharma2024sycophancy,denison2024subterfuge} documents models that rewrite the test rather than the code when under pressure to appear competent.

Liza~\citep{liza} answers with layers: a behavioral contract naming failure modes and countermeasures, adversarial doer and reviewer pairs whose verdicts bind, a task state machine with forbidden transitions and bounded iteration enforced by a compiled supervisor, per-task worktrees, and a blackboard~\citep{nii1986blackboard,erman1980hearsay} recording every transition. Its review protocol is what we borrow most directly. A finding must be answered by accepting, countering, refuting, or escalating. Repeating an assertion is not a permitted move, and a doer who contests must name the concrete harm the requested change would cause. What Liza's blackboard does not carry is the code. Evidence is a commit SHA and an exit code, and whether a verdict still means anything after later changes is answered procedurally, by re-review after rebase, rather than structurally.

Both lineages need the dependency graph and both compute or approximate it. Only the first keeps it. The second needs it at a different moment: not when the agent reads but when the gate decides. Keying claims by cone hash lets one graph serve both moments.

\section{The Evidence Graph}

\subsection{Modules, hashes, and cones}

A repository state $S$ is a finite set of files with contents, partitioned into modules $\Mods$ (in \assay{}, directories up to a configurable depth). With $H$ a collision-resistant hash, the content hash of a module is the hash of its sorted (path, file hash) pairs:
\[
\hash_S(m) = H\big(\langle (f, H(\text{content}_S(f))) : f \in m \rangle_{\text{sorted}}\big).
\]
Import resolution yields a dependency relation $\Deps_S \subseteq \Mods \times \Mods$, where $(a, b) \in \Deps_S$ means some file in $a$ imports some file in $b$. The relation may contain cycles, as Python packages routinely do. Let $C(m)$ be the strongly connected component of $m$ and $\mathrm{succ}(C)$ the components reachable from $C$ in one step of the condensation DAG.

\begin{definition}[Cone and cone hash]
The dependency cone of $m$ is $\cone_S(m) = \{m\} \cup \{m' : m \to^{*} m'\}$. The cone hash is defined on components in reverse topological order of the condensation DAG:
\begin{multline*}
\chash_S(C) = H\Big(\langle (m, \hash_S(m)) : m \in C \rangle_{\text{sorted}} \;\|\\
\langle \chash_S(C') : C' \in \mathrm{succ}(C) \rangle_{\text{sorted}}\Big),
\quad \chash_S(m) = \chash_S(C(m)).
\end{multline*}
\end{definition}

This is a Merkle tree~\citep{merkle1987} over the condensation DAG with sharing: each component's hash is computed once and referenced by every component above it. Members of a cycle share one cone hash, which is right, since they depend on each other. The root hash of the repository is the hash of all module content hashes and serves as the coarsest binding scope.

\subsection{Claims and bindings}

\begin{definition}[Claim]
A claim is a tuple $c = (p, S_c, \sigma, \beta, e, a, t)$: a predicate $p$ (\emph{tests-pass}, \emph{lint-clean}, \emph{no-secrets}, \emph{behavior-preserved}, \emph{human-approved}, and so on), a non-empty set of subject modules $S_c \subseteq \Mods$, a scope $\sigma \in \{\text{cone}, \text{self}, \text{repo}\}$, bindings $\beta : S_c \to \text{hashes}$ recorded at attestation time, evidence $e$, an attestor $a$ with a role, and a timestamp $t$.
\end{definition}

For scope \emph{cone}, $\beta(s) = \chash_S(s)$ at the state in which the evidence was produced. Evidence is what the tool observed, not what the attestor said: the command that ran, its exit code, a hash of its full output, size and duration, and, when the output is recognisable test-runner output, the number of tests that ran. A claim whose command exited non-zero is recorded as a \emph{failed} claim, never discarded. Failure is evidence too.

\begin{definition}[Freshness]
A cone-scoped claim $c$ is fresh in state $S'$ iff $\beta(s) = \chash_{S'}(s)$ for all $s \in S_c$. Otherwise it is stale.
\end{definition}

\subsection{Staleness is sound and minimal}

Write $\Delta(S, S') = \{m : \hash_S(m) \neq \hash_{S'}(m)\}$ for the modules whose content changed, including modules added or removed. Import edges derive from file contents, so $\Deps_S$ and $\Deps_{S'}$ can differ only at modules in $\Delta$. The propositions hold up to hash collisions.

\begin{proposition}[Soundness]
\label{prop:sound}
If some module in $\bigcup_{s \in S_c} \cone_S(s)$ is in $\Delta(S, S')$, then $c$ is stale in $S'$.
\end{proposition}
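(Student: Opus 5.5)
We argue by contradiction: assume $c$ is fresh in $S'$, so $\beta(s) = \chash_{S'}(s)$ for every $s \in S_c$, while also $\beta(s) = \chash_S(s)$ by the binding rule for cone scope. Fix $s \in S_c$ and a module $m \in \cone_S(s) \cap \Delta(S,S')$, which exists by hypothesis. The goal is to show $\chash_S(s) \neq \chash_{S'}(s)$ except on a hash collision, contradicting freshness.

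The plan is to unfold the cone hash along a dependency path and compare the two sides using collision resistance. Because $m \in \cone_S(s)$, there is a path $s = m_0 \to m_1 \to \cdots \to m_k = m$ in $\Deps_S$. Projecting it to components gives a path $C_S(m_0), \dots, C_S(m_k)$ in the condensation DAG of $S$, where each consecutive pair is either equal or an edge. We show by induction backwards from $m_k$ that $\chash_S(m_i) \neq \chash_{S'}(m_i)$ for every $i$. For the base case, $m \in \Delta$ means $\hash_S(m) \neq \hash_{S'}(m)$. The pair $(m,\hash(m))$ is part of the preimage of $\chash(C(m))$ on both sides. If $m$ is missing from one side, the sorted member lists already differ. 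In either case the two preimages of $\chash(C(m))$ differ, so the hashes differ barring a collision.

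For the inductive step, suppose $\chash_S(m_{i+1}) \neq \chash_{S'}(m_{i+1})$. If $m_i$ and $m_{i+1}$ lie in the same component in $S$, we need the step to transfer. If they also share a component in $S'$, the step is immediate. Otherwise the component structure changed, so the member list of $C(m_i)$ differs between $S$ and $S'$, or its successor list does. Either way the preimages differ. If instead $C_S(m_{i+1}) \in \mathrm{succ}(C_S(m_i))$, we consider $S'$. Suppose the edge $m_i \to m_{i+1}$ survives in $\Deps_{S'}$. Then $\chash_{S'}(m_{i+1})$ appears in the preimage of $\chash_{S'}(m_i)$, or shares its component there, while $\chash_S(m_{i+1})$ appears in the preimage on the $S$ side. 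Since the successor multiset contains a value that, absent collisions, occurs on only one side, the sorted lists differ and so do the hashes. Suppose instead the edge does not survive. Edges derive from file contents, so $m_i \in \Delta$, and the base-case argument applies directly to $m_i$.

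The main obstacle is making this inductive step rigorous. Components may merge or split between $S$ and $S'$, so "the preimage of $\chash(C(m_i))$" refers to different member and successor sets on the two sides. The cleanest fix is a single collision-resistance lemma: if $\chash_S(m)=\chash_{S'}(m)$, then, by injectivity of $H$ on distinct inputs, the sorted member pairs agree and the sorted successor hash lists agree. Recursing downward, every module reachable from $m$ in $S$ has the same content hash and the same cone hash in $S'$, and the recursion terminates because the condensation DAG is finite and acyclic. Applying this contrapositively to $s$ and the reachable $m \in \Delta$ gives the contradiction. I would prove the lemma by induction on the height of $C_S(m)$ in the condensation DAG. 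Each successor hash on the $S$ side is matched to an equal successor hash on the $S'$ side, and the induction hypothesis then applies to that successor.
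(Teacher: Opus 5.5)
Your final argument, the lemma, is correct, but it follows a different route from the paper's sketch.

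\textbf{The two routes.} The paper propagates \emph{inequality upward}. A change to $m$ alters the own-part of $C(m)$. Every component that reaches $C(m)$ then carries a changed successor hash, and induction in reverse topological order lifts the change to $C(s)$. Added or removed edges are patched by a separate remark that uses the fact that $\Deps_S$ and $\Deps_{S'}$ can differ only at modules in $\Delta$. Your opening path induction is essentially that argument, and you correctly identify its weak point: it tacitly treats ``the component of $m_i$'' as the same object in $S$ and $S'$.

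Your lemma instead propagates \emph{equality downward}. From $\chash_S(s)=\chash_{S'}(s)$, injectivity of $H$ forces equal member pairs and equal sorted successor lists. You then recurse by height in the condensation DAG of $S$ alone.

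\textbf{What each buys.} Your route handles merged or split components, added or removed modules, and rewired edges in one argument. It never uses the content-derived-edges fact. Soundness therefore becomes a property of the Merkle commitment plus collision resistance alone, and that fact is needed only for Minimality. The paper's version is shorter, and it reads as the same upward traversal that computes the blast radius in Proposition~\ref{prop:radius}, which is its expository point.

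\textbf{A step to state explicitly.} Equal sorted lists give you, for each successor $C'$ of $C_S(m)$, some successor $C''$ of $C_{S'}(m)$ with $\chash_S(C')=\chash_{S'}(C'')$. Your induction hypothesis, however, concerns a single module whose cone hash agrees in both states. Apply injectivity once more:
\begin{itemize}
\item the member pairs of $C'$ and $C''$ coincide, so $C'$ and $C''$ have the same members;
\item hence $C_{S'}(m')=C''$ for every $m'\in C'$;
\item so $\chash_{S'}(m')=\chash_S(m')$, and the hypothesis applies.
\end{itemize}
This is exactly where module identity in the hash preimage matters. Without it, two successors could trade hashes between $S$ and $S'$ and leave the sorted list unchanged.

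The same observation is what justifies ``occurs on only one side'' in your path induction. It also justifies the paper's step ``includes that hash in its successor list''. So in a write-up you can drop the path induction and present the lemma alone.
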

\begin{proof}[Proof sketch]
Let $m \in \cone_S(s) \cap \Delta$. Then $\hash_{S'}(m) \neq \hash_S(m)$, so the component containing $m$ has a different own-part and hence a different cone hash. Every component from which $C(m)$ is reachable includes that hash in its successor list, directly or through a chain of components each changed by the same argument, so by induction along reverse topological order every such cone hash changes. $C(s)$ reaches $C(m)$, so $\chash_{S'}(s) \neq \beta(s)$. If the change instead added or removed an edge that alters $\cone(s)$, the file carrying the import changed, so its module lies in $\Delta \cap \cone_S(s)$ or in $\Delta \cap \cone_{S'}(s)$ and the argument applies in that direction.
\end{proof}

\begin{proposition}[Minimality]
\label{prop:minimal}
If no module in $\bigcup_{s \in S_c} \cone_S(s)$ is in $\Delta(S, S')$, then $c$ is fresh in $S'$.
\end{proposition}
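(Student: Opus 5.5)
The plan is to show that, for each subject $s \in S_c$, the cone of $s$ together with everything the cone hash of $s$ depends on is literally the same data in $S$ and $S'$. The cone hash is then recomputed to the same value, giving $\chash_{S'}(s) = \chash_S(s) = \beta(s)$. Fix $s$ and write $K = \cone_S(s)$. By hypothesis $K \cap \Delta(S, S') = \emptyset$, so $\hash_{S'}(m) = \hash_S(m)$ for every $m \in K$, and in particular every $m \in K$ exists in both states.

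\textbf{Step 1: the cone is unchanged, $\cone_{S'}(s) = K$.} Import edges are a function of file contents. A module outside $\Delta$ has identical files, and up to collisions identical contents, so its outgoing edges are unchanged. This uses the paper's standing assumption that import resolution for a file depends only on its content and on which targets exist; I would state that assumption explicitly here. Every module reachable from $s$ in $S$ lies in $K$, and each has the same out-edges in $S'$. An induction on path length then shows that reachability from $s$ in $S'$ stays inside $K$ and reproduces exactly the edges of $S$ restricted to $K$. So the induced subgraph on $K$ is identical in $S$ and $S'$, and no new edges leave $K$.

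\textbf{Step 2: components and the condensation restricted to $K$ coincide.} Every SCC $C$ with $C \cap K \ne \emptyset$ lies entirely inside $K$, because any member of $C$ reaches the rest of $C$. Moreover, a cycle through a node of $K$ uses only nodes reachable from that node, and these lie in $K$. So the SCCs meeting $K$ are determined by the induced subgraph on $K$, and by Step 1 they are the same in both states. The same holds for their successor sets $\mathrm{succ}(C)$, since these are read off out-edges of members of $K$.

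\textbf{Step 3: induction in reverse topological order.} For each component $C \subseteq K$, I would show $\chash_{S'}(C) = \chash_S(C)$. The own-part $\langle (m, \hash(m)) : m \in C\rangle$ agrees because $C$ is the same set and the content hashes agree. The successor part agrees because $\mathrm{succ}(C)$ is the same set of components, all inside $K$, and their hashes agree by the induction hypothesis. Sorting is applied to identical multisets, so the input to $H$ is identical. Applying this to $C(s)$ gives freshness for $s$, and since $s$ was arbitrary, $c$ is fresh.

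The main obstacle is Step 1, specifically ruling out an edge into $K$'s behaviour caused by a change \emph{outside} $K$. For example, adding a new module elsewhere might change how an unchanged file's import resolves, such as a new file shadowing a path. I would handle this by making the resolution assumption explicit: edges out of $m$ are determined by $m$'s contents alone. Equivalently, any resolution change would have to change a module in $K$, contradicting the hypothesis. Edges into $K$ from outside are harmless, because cone hashes only look downward.
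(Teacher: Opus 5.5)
Your proof is correct and takes the same route as the paper's sketch. Unchanged content in the cone means no out-edge of a cone module changes, so $\cone_{S'}(s)=\cone_S(s)$ with identical content hashes, and the cone hash is then recomputed from identical inputs; your Steps 2 and 3 (SCCs meeting $K$ lie inside $K$, then induction in reverse topological order) spell out the paper's ``deterministic function of exactly those hashes and that structure,'' and your assumption that a module's out-edges depend only on its own contents is the one the paper uses implicitly in ``no edge originating inside the cone changed.''
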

\begin{proof}[Proof sketch]
No module in the cone changed content, so no edge originating inside the cone changed, so $\cone_{S'}(s) = \cone_S(s)$ with identical content hashes throughout. The cone hash is a deterministic function of exactly those hashes and that structure.
\end{proof}

Together the propositions say cone binding invalidates a claim if and only if dependency semantics require re-verification. This is the correctness criterion incremental build systems use to decide whether a cached output can be reused~\citep{mokhov2018build}. A claim is a verifying trace whose key is $(p, S_c)$, whose input hashes are the cone hashes, and whose output is the evidence. The gate's freshness check is the build system's ``is this trace still valid'' check, applied to assertions rather than artifacts.

\begin{proposition}[Blast radius is the staleness frontier]
\label{prop:radius}
Let $\BR(\Delta) = \Delta \cup \{m : m \to^{*} d \text{ for some } d \in \Delta\}$. Then $\BR(\Delta) = \{m : \chash_S(m) \neq \chash_{S'}(m)\}$, and the cone-scoped claims that become stale are exactly $\{c : S_c \cap \BR(\Delta) \neq \emptyset\}$.
\end{proposition}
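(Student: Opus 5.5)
The plan is to reduce the statement to Propositions~\ref{prop:sound} and~\ref{prop:minimal}, applied to single modules. The key observation is that, reading $\to^{*}$ in $\BR(\Delta)$ as reachability in $\Deps_S$ (the state in which the claims were attested), membership in the blast radius is a statement about cones:
\[
m \in \BR(\Delta) \iff \cone_S(m) \cap \Delta \neq \emptyset .
\]
This holds because $\cone_S(m)$ is exactly $m$ together with everything $m$ reaches, and $\BR(\Delta)$ collects exactly the modules that are in $\Delta$ or reach it. I would prove this equivalence first, by unfolding the two definitions. It needs no hashing argument.

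Next I would prove the first equality, $\BR(\Delta) = \{m : \chash_S(m) \neq \chash_{S'}(m)\}$, by considering an auxiliary cone-scoped claim $c_m$ with subject set $\{m\}$ and binding $\beta(m) = \chash_S(m)$. By definition, $c_m$ is stale in $S'$ exactly when $\chash_{S'}(m) \neq \chash_S(m)$.
\begin{itemize}
  \item If $m \in \BR(\Delta)$, the equivalence above puts a changed module in $\cone_S(m)$, so Proposition~\ref{prop:sound} makes $c_m$ stale and the cone hash moves.
  \item If $m \notin \BR(\Delta)$, the cone of $m$ is disjoint from $\Delta$, so Proposition~\ref{prop:minimal} makes $c_m$ fresh and the cone hash is unchanged.
\end{itemize}
Both directions hold up to hash collisions, as the paper stipulates.

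For the second equality, I would take an arbitrary cone-scoped claim $c$ whose bindings were recorded at $S$, so that $\beta(s) = \chash_S(s)$ for every $s \in S_c$. Freshness requires equality at every subject. Hence $c$ is stale iff some $s \in S_c$ has $\chash_{S'}(s) \neq \chash_S(s)$. By the first equality, this says some $s \in S_c$ lies in $\BR(\Delta)$, that is, $S_c \cap \BR(\Delta) \neq \emptyset$. Equivalently, one can apply Propositions~\ref{prop:sound} and~\ref{prop:minimal} directly to $c$: their hypotheses concern $\bigcup_{s\in S_c}\cone_S(s)$, and by the equivalence above this union meets $\Delta$ iff some subject lies in $\BR(\Delta)$.

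The main obstacle is the choice of graph for $\to^{*}$, since edges may differ between $\Deps_S$ and $\Deps_{S'}$. I would resolve this as follows.
\begin{itemize}
  \item An edge can only appear or disappear at a module in $\Delta$. So the two reachability relations agree on every path until it first enters $\Delta$.
  \item Therefore "$m$ reaches $\Delta$" has the same truth value in $S$ and in $S'$. Take the prefix of a witnessing path up to its first module of $\Delta$: every edge on it leaves a module outside $\Delta$, and such edges are common to both graphs.
  \item Modules added or removed need one extra check. A module present in only one state lies in $\Delta$ by definition, so it belongs to $\BR(\Delta)$ trivially. Its cone hash is undefined on one side, which I count as "changed".
\end{itemize}
With this check, the definition of $\BR$ is independent of which state's graph is used, and the reduction above goes through.
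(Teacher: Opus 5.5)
Your proposal is correct and is the paper's argument. You unfold $\BR(\Delta)$ into the condition $\cone_S(m)\cap\Delta\neq\emptyset$ and then invoke Propositions~\ref{prop:sound} and~\ref{prop:minimal}; the singleton claims, the claim-level corollary, and the check that reachability is the same in $S$ and $S'$ only spell out what the paper's one-line sketch leaves implicit. That last check relies on edges leaving modules outside $\Delta$ being identical in both states, which is the same assumption the paper uses in its proof of Proposition~\ref{prop:minimal}.
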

\begin{proof}[Proof sketch]
$m \in \BR(\Delta)$ iff $\cone_S(m) \cap \Delta \neq \emptyset$, which by Propositions~\ref{prop:sound} and~\ref{prop:minimal} holds iff $\chash(m)$ changed.
\end{proof}

Proposition~\ref{prop:radius} is the identity that makes the two lineages one. The number an agent wants before it edits (how far does this reach?) and the number the gate wants after (which evidence is now void?) are the same set, computed by one traversal. Fig.~\ref{fig:cone} shows it on a five-module example.

\begin{figure}[t]
  \centering
  \begin{tikzpicture}[x=0.93cm, y=0.85cm]
    \node[font=\tiny, text=assaymuted] at (2.05,4.05) {before: two claims, both fresh};
    \node[fresh] (core) at (2.0,0.5) {core};
    \node[fresh] (service) at (2.0,1.6) {service};
    \node[fresh] (app) at (2.0,2.7) {app};
    \node[fresh] (tests) at (1.0,3.5) {tests};
    \node[fresh] (util) at (0.4,2.1) {util};
    \draw[dep] (service) -- (core);
    \draw[dep] (app) -- (service);
    \draw[dep] (tests) -- (app);
    \draw[dep] (tests) -- (util);
    \node[tagok, anchor=west] at (2.45,2.7) {c-1: tests-pass\\bound to cone(app)\\fresh};
    \node[tagok, anchor=north] at (0.55,1.55) {c-2: tests-pass\\bound to cone(util)\\fresh};
    \begin{scope}[xshift=4.35cm]
      \node[font=\tiny, text=assaymuted, align=center] at (2.05,4.05) {after editing core: cone(app) changed,\\cone(util) did not};
      \node[changed] (core2) at (2.0,0.5) {core};
      \node[stalemod] (service2) at (2.0,1.6) {service};
      \node[stalemod] (app2) at (2.0,2.7) {app};
      \node[stalemod] (tests2) at (1.0,3.5) {tests};
      \node[fresh] (util2) at (0.4,2.1) {util};
      \draw[dep] (service2) -- (core2);
      \draw[dep] (app2) -- (service2);
      \draw[dep] (tests2) -- (app2);
      \draw[dep] (tests2) -- (util2);
      \node[tagbad, anchor=west] at (2.45,2.7) {c-1: tests-pass\\bound to cone(app)\\stale};
      \node[tagok, anchor=north] at (0.55,1.55) {c-2: tests-pass\\bound to cone(util)\\fresh};
    \end{scope}
    \draw[assaymuted!40, line width=0.4pt] (4.2,0.1) -- (4.2,3.9);
  \end{tikzpicture}
  \caption{Cone hashes and staleness. Arrows point from a module to what it depends on. After an edit to \texttt{core}, every module that reaches \texttt{core} (shaded) has a new cone hash, so the claim on \texttt{app} is stale and the claim on \texttt{util} is not.}
  \label{fig:cone}
\end{figure}

\subsection{Risk and obligation}

Not every change deserves the same review. Each module carries a risk score on absolute scales so it is comparable across repositories:
\[
r(m) = \tfrac{1}{2}\min\!\big(1, \tfrac{\text{commits}_{90d}(m)}{20}\big) + \tfrac{1}{2}\min\!\big(1, \tfrac{\text{fan-in}(m)}{10}\big).
\]
Churn approximates instability and fan-in approximates the cost of being wrong. For a change touching $\Delta$, the obligation level is a function of the radius fraction $|\BR(\Delta)|/|\Mods|$, the radius size, and the maximum risk inside the radius. Level 0 requires passing tests on the changed modules. Level 1 requires them on the whole radius. Level 2 adds a lint claim and a secrets scan on the changed modules and asks for adversarial review. Level 3 adds a human-approved claim. Fraction thresholds apply only once the radius has a minimum absolute size, so a two-module change in a five-module repository is not an escalation. Where a module's language has no test runner, the obligation asks for a build and a typecheck claim instead, so it is never unsatisfiable by construction. The thresholds are policy, stored in the index and overridable per repository. We claim only that they are explicit, monotone in radius and risk, and cheap.

\section{Protocol and Gate}

\subsection{Roles and lifecycle}

Three roles appear in the ledger: doer, reviewer, and human. Any identity may attest. Only reviewers and humans may issue verdicts, and no identity may issue one on its own claim. Checked at append time, this is Liza's ``no self-approval'' invariant moved from task to claim granularity. A human attestation is the escalation authority itself and needs no reviewer, though it can still be refuted if its evidence is contradicted.

Fig.~\ref{fig:lifecycle} shows the derived states. A claim starts \emph{attested}, or \emph{failed} if its evidence exited non-zero. A reviewer may accept it, counter it with a finding, refute it, or escalate it. A countered claim is \emph{contested}, and the attestor must respond before any further verdict, either by accepting the finding and superseding the claim with a stronger one, or by contesting, which requires naming the concrete harm the requested change would cause. Three counters escalate automatically, and only a human verdict resolves an escalation. Refuted and superseded are terminal. Status is never stored. It is derived from the ledger each time it is needed, so the ledger cannot disagree with its own history. Staleness is orthogonal to status and derived from the index: an accepted claim whose bindings drifted is stale, and a stale claim cannot be accepted until re-attested.

\begin{figure}[t]
  \centering
  \begin{tikzpicture}[x=1cm, y=0.8cm]
    \node[state, fill=assayblue!12, draw=assayblue] (att) at (1.0,3.4) {attested};
    \node[state, fill=assayorange!15, draw=assayorange, font=\tiny\bfseries] (con) at (4.25,3.4) {contested\\(up to 3 rounds)};
    \node[state, fill=assaygood!8, draw=assaygood] (acc) at (7.5,3.4) {accepted};
    \node[state, fill=assaywarn!12, draw=assaywarn] (esc) at (4.25,1.9) {escalated};
    \node[state, fill=assaybad!7, draw=assaybad] (ref) at (4.25,0.4) {refuted};
    \node[state, fill=assaymuted!12, draw=assaymuted] (sup) at (7.5,0.4) {superseded};
    \node[state, fill=assaymuted!12, draw=assaymuted] (sta) at (1.0,0.4) {stale};
    \draw[tr, assaygood] (att.north) to[bend left=22] node[trlab, above] {accept: fresh claim, independent reviewer} (acc.north);
    \draw[tr, assayorange] (att) -- node[trlab, above] {counter} (con);
    \draw[tr, assaygood] (con) -- node[trlab, above] {contest, then accept} (acc);
    \draw[tr, assaywarn] (con) -- node[trlab, right] {3 rounds} (esc);
    \draw[tr, assaymuted] (con.south east) -- node[trlab, sloped, below, pos=0.62] {accept-finding + new claim} (sup.north west);
    \draw[tr, assaywarn] (att.south east) -- node[trlab, sloped, above, pos=0.45] {escalate} (esc.west);
    \draw[tr, assaybad] (att.south) to[bend right=18] node[trlab, sloped, below, pos=0.4] {refute} (ref.west);
    \draw[tr, assaybad] (esc) -- node[trlab, right] {human refute} (ref);
    \draw[tr, assaygood] (esc.east) -- node[trlab, sloped, above, pos=0.3] {human accept} (acc.south);
    \draw[tr, assaymuted, dashed] ([xshift=-3mm]att.south) -- node[trlab, left] {drift} ([xshift=-3mm]sta.north);
    \draw[tr, assayblue, dashed] ([xshift=3mm]sta.north) -- node[trlab, right] {re-attest} ([xshift=3mm]att.south);
  \end{tikzpicture}
  \caption{Claim lifecycle. Solid arrows are ledger appends, dashed arrows are computed from the index. A non-terminal claim becomes stale on binding drift and cannot be accepted until re-attested. Refuted and superseded are terminal.}
  \label{fig:lifecycle}
\end{figure}

\subsection{Invariants in code}

Following Liza's stance that invariants belong in code rather than prompts, each of the following is enforced mechanically, most at ledger-append time and the rest at the gate. An attestor cannot review their own claim. A stale claim cannot be accepted. No verdict lands on a refuted or superseded claim. Only a human resolves an escalation. A contest must name a harm and a response must come from the attestor. A contested claim needs a response before the next verdict. Every ledger line carries an HMAC signature under a key held outside the repository, and an altered or unsigned line fails the gate. The gate passes only with fresh, independently accepted, plausible evidence for every requirement.

\subsection{The gate}

The gate is a function of the index, the ledger, and the changed modules. It computes the obligation, then for each required (predicate, module) pair finds the best available claim and classifies it as ok, weak, unreviewed, stale, escalated, failed, refuted, or missing. It passes iff every pair is ok and every signature verifies. Three plausibility checks stand between an accepted claim and ok, all floors rather than proofs. A \emph{tests-pass} command must look like a test runner that can test the subject's language. It must have produced output. And the number of tests that ran must not drop below what earlier evidence on an overlapping subject reported, unless a fresh human-approved claim covers the subject. We call the last check evidence monotonicity. It is the mechanical counterpart of the contract clause ``do not delete the failing test'', and it costs one integer per claim.

The gate guarantees that a merge carries fresh, signed, independently reviewed evidence for the whole blast radius, and that the evidence is at least plausible. It does not guarantee the evidence is sufficient. A reviewer can accept a weak suite, two models from one provider can share a blind spot, and a doer that can read the signing key can forge the ledger. These are the boundaries Liza states for its own verdicts~\citep{liza}, restated here because a mechanical floor invites over-reading.

\section{Implementation}

\subsection{Index}

\texttt{assay index} lists files through \texttt{git ls-files} so ignore rules apply, hashes and parses each file, resolves imports, groups files into modules, computes content and cone hashes, mines 90 days of history for activity, computes risk, detects per-language test, build, and typecheck commands, and writes \texttt{assay.json}. Python is parsed with the standard library AST. TypeScript and JavaScript, Go, Rust, Java and Kotlin, Ruby, and C use conservative regular expressions that recover import specifiers and exported names. Resolution understands relative and package imports, \texttt{go.mod} module paths, Rust crate paths and workspace crates, Java packages, and TypeScript path aliases. A tree-sitter backend can replace any parser behind the same interface. Zero dependencies keep the gate runnable anywhere Python runs.

A per-file cache keyed by size and modification time stores each file's hash and parse result, so a rebuild re-parses only files whose stat changed before recomputing the graph, the cone hashes, and the activity summary. This is Stacklit's Merkle-skip idea at file granularity. The pre-commit hook installed by \texttt{assay init -{}-hook} runs it and stages the index, so the committed index is always current.

\subsection{Brief, ledger, and interfaces}

The brief is what an agent reads before opening a source file. It carries the project's shape (hubs, hot spots, evidence commands per language), the change under way (changed modules, radius, obligation, required evidence), the state of the evidence (stale claims, open findings, the gate verdict), and a module directory ordered by fan-in and trimmed to a token budget, 600 by default. Doer and reviewer start from the same brief. Fig.~\ref{lst:brief} shows the brief for FastAPI with a hypothetical edit to its root package.

The ledger is a JSON-lines file under \texttt{.assay/}, committed as the audit trail. Every line is signed with HMAC-SHA256 under a key read from the environment or a git-ignored file. In a supervised deployment the process that executes the agent's tool calls holds the key and the agent never sees it. Appends take an exclusive lock for a read-validate-write cycle. Entries are never modified. Superseding a claim appends a new one that names the old.

The CLI exposes \texttt{init}, \texttt{index}, \texttt{brief}, \texttt{blast}, \texttt{attest}, \texttt{review}, \texttt{respond}, \texttt{gate}, \texttt{verify}, and \texttt{mcp}. The MCP server speaks JSON-RPC over stdio and exposes the same operations as tools, so Claude Code, Cursor, or any MCP client can use \assay{} without shelling out. In continuous integration, \texttt{assay gate -{}-base origin/main} exits non-zero when a change lacks fresh reviewed evidence for its radius. The implementation is \LocCore{} lines of Python with no runtime dependencies and \NumTests{} tests.

\begin{figure*}[t]
\begin{lstlisting}[basicstyle=\ttfamily\tiny]
# Assay brief: fastapi @ d5aba4195b | 195 modules, 1142 files, 113579 lines | python 1138, javascript 4
Commands: python: test pytest, typecheck mypy .; javascript: no test runner | Frameworks: fastapi, flask, pytest
Hubs (touch carefully): fastapi (fan-in 171, risk 1.00); tests (fan-in 41, risk 0.95); fastapi/openapi (fan-in 6, risk ..
Hot 90d: fastapi (38 commits); tests (18 commits); scripts (8 commits); fastapi/dependencies (7 commits); fastapi/open ..
Changed: fastapi -> blast radius 177/195 modules (91%), max risk 1.00 -> level 3 (escalate)
Dependents in radius: docs_src/additional_responses, docs_src/additional_status_codes, docs_src/advanced_middleware, d ..
Required evidence: tests-pass[docs_src/additional_responses, docs_src/additional_status_codes, docs_src/advanced_middl ..
Review: independent reviewer, adversarial, human sign-off, max 3 rounds
Modules (by fan-in):
- fastapi: FastAPI framework, high performance, easy to learn, fast to code, read | exports: FastAPI, BackgroundTasks, ..
- docs_src/additional_responses: additional_responses | exports: Item, Message, read_item(), Item, read_item(), Item,  ..
- docs_src/additional_status_codes: additional_status_codes | exports: upsert_item(), upsert_item() | deps: fastapi |  ..
- .. 192 more modules: `assay module <path>` or the MCP tool get_module
\end{lstlisting}
\caption{The brief for FastAPI with a hypothetical edit to the \texttt{fastapi} package (first lines, truncated at the right margin).}
\label{lst:brief}
\end{figure*}

\section{Evaluation}

Five questions, each answered by a scripted experiment with no model in the loop. E1: how many tokens does orientation cost with a brief, an index, exploration, or a full dump? E2: how much does Merkle skipping save on rebuilds? E3: what should a claim be bound to? E4: does the gate block the ways a doer can cheat and admit the honest paths? E5: how large is a typical change, and how is obligation distributed?

\subsection{Setup}

Five public repositories spanning five languages, chosen to overlap with the comparison set published by Stacklit: Express (JavaScript), Flask (Python), Gin (Go), Axum (Rust), and FastAPI (Python), each a shallow clone with 90 days of history, plus \assay{} itself. Table~\ref{tab:e1} gives sizes. Tokens are counted with \EstimatorName. Every figure and number below is produced by the released experiment scripts and injected into this document by \texttt{paper/make\_numbers.py}. A dogfooding run on a private TypeScript Electron application took one real change through attestation, independent review, and a passing gate with the project's own typecheck and build commands, and surfaced two obligation defects fixed in the released version.

\subsection{E1: tokens to orient}

Four ways for an agent to learn where things live: a full dump of every source file, an exploration proxy consisting of a directory listing plus the ten most-imported files (a conservative stand-in for the 8 to 12 files an agent reads before acting), the modules section of \texttt{assay.json}, and the brief at its default budget.

\begin{table*}[t]
  \centering\footnotesize
  \caption{E1: tokens required to orient an agent, by method. Index is the complete \texttt{assay.json} including per-file entries. Modules only is the section an agent would read whole. Gain is exploration divided by brief.}
  \label{tab:e1}
  \begin{tabular}{lrrrrrrrr}
\toprule
Repository & Lines & Modules & Full dump & Exploration & Index & Modules only & Brief & Gain \\
\midrule
express & 21,492 & 35 & 137,093 & 17,782 & 12,447 & 6,230 & 635 & 28$\times$ \\
flask & 18,345 & 16 & 134,439 & 46,673 & 11,330 & 3,882 & 595 & 78$\times$ \\
gin & 24,192 & 7 & 192,995 & 8,328 & 10,862 & 2,379 & 592 & 14$\times$ \\
axum & 46,686 & 90 & 335,509 & 45,581 & 42,112 & 19,308 & 621 & 73$\times$ \\
fastapi & 113,579 & 195 & 733,647 & 68,215 & 142,463 & 51,842 & 599 & 114$\times$ \\
assay & 4,668 & 5 & 58,582 & 24,430 & 4,487 & 1,626 & 434 & 56$\times$ \\
\bottomrule
\end{tabular}

\end{table*}

The brief costs \EOneMinBrief{} to \EOneMaxBrief{} tokens on the external corpora, which is \EOneMinXExpl$\times$ to \EOneMaxXExpl$\times$ less than the exploration proxy and \EOneMinXFull$\times$ to \EOneMaxXFull$\times$ less than a dump. The brief's cost is flat in repository size by construction. The modules section grows with the number of modules (FastAPI's \EOneFastapiModules{} modules, most of them documentation examples, cost \EOneFastapiIndexModulesOnly{} tokens) and is meant to be queried through \texttt{get\_module} rather than read whole. The ratios match those Stacklit reports for its own index~\citep{stacklit}. What no map carries is the radius, the obligation, and the stale claims.

\subsection{E2: incremental indexing}

\begin{table}[t]
  \centering\footnotesize
  \caption{E2: build time in milliseconds for a cold build, a warm build after one edited file, and a warm build with no change. Gain is cold divided by warm.}
  \label{tab:e2}
  \resizebox{\linewidth}{!}{\begin{tabular}{lrrrrrr}
\toprule
Repository & Files & Cold & One edit & No change & Gain (edit) & Gain (no change) \\
\midrule
express & 141 & 45 & 24 & 24 & 1.9$\times$ & 1.9$\times$ \\
flask & 83 & 94 & 32 & 23 & 3.0$\times$ & 4.1$\times$ \\
gin & 98 & 35 & 20 & 20 & 1.7$\times$ & 1.7$\times$ \\
axum & 305 & 62 & 38 & 34 & 1.6$\times$ & 1.8$\times$ \\
fastapi & 1142 & 508 & 106 & 93 & 4.8$\times$ & 5.4$\times$ \\
assay & 31 & 56 & 31 & 22 & 1.8$\times$ & 2.5$\times$ \\
\bottomrule
\end{tabular}
}
\end{table}

A warm rebuild with nothing changed is \ETwoMinSpeedupNoop$\times$ to \ETwoMaxSpeedupNoop$\times$ faster than a cold build (Table~\ref{tab:e2}), and with one edited file it re-parses exactly that file. On FastAPI (\ETwoFastapiFiles{} files) a cold build takes \ETwoFastapiCold{} ms and a one-file rebuild \ETwoFastapiWarmEdit{} ms. The residual cost is the history scan and graph recomputation, both cacheable. For a pre-commit hook this is below the noise floor of the commit itself.

\subsection{E3: what to bind a claim to}

One claim per module. For $k \in \{1, 2, 4, 8, 16, 32\}$ we draw \EThreeTrials{} random sets of $k$ edited modules and compute, under three binding strategies, the fraction of claims that must be re-verified (cost) and the fraction that dependency semantics say must be re-verified but the strategy leaves fresh (missed). Repo hash binds to the root hash, self hash to the module's own content hash, and cone hash is \assay{}'s default.

\begin{figure*}[t]
  \centering
  \begin{tikzpicture}
    \begin{groupplot}[
      group style={group size=3 by 2, horizontal sep=7mm, vertical sep=7mm,
                   xlabels at=edge bottom, ylabels at=edge left},
      width=5.2cm, height=2.2cm, scale only axis,
      xmode=log, log basis x=2, xtick=data,
      ymin=0, ymax=1.04, ytick={0,0.25,0.5,0.75,1},
      yticklabel={\pgfmathparse{\tick*100}\pgfmathprintnumber{\pgfmathresult}\%},
      tick label style={font=\tiny}, label style={font=\tiny}, title style={font=\scriptsize, yshift=-1mm},
      xlabel={edited modules $k$}, ylabel={claims re-verified},
      grid=major, grid style={assaymuted!15},
      axis line style={assaymuted!60},
      legend style={font=\tiny, draw=none, fill=white, fill opacity=0.85, text opacity=1, at={(0.03,0.97)}, anchor=north west, row sep=-2pt},
      cycle list={
        {assaymuted, mark=square*, mark size=1.1pt, line width=0.7pt},
        {assayblue, mark=*, mark size=1.1pt, line width=0.9pt},
        {assayorange, mark=triangle*, mark size=1.3pt, line width=0.7pt},
        {assaybad, dotted, mark=none, line width=0.8pt}},
    ]
    \nextgroupplot[title={Express (\EOneExpressModules{} modules)}]
      \addplot table[x=k, y=repo] {tables/e3_express.dat}; \addlegendentry{repo hash}
      \addplot table[x=k, y=cone] {tables/e3_express.dat}; \addlegendentry{cone hash (Assay)}
      \addplot table[x=k, y=self] {tables/e3_express.dat}; \addlegendentry{self hash}
      \addplot table[x=k, y=selfmissed] {tables/e3_express.dat}; \addlegendentry{self hash: missed invalidations}
    \nextgroupplot[title={Flask (\EOneFlaskModules{} modules)}]
      \addplot table[x=k, y=repo] {tables/e3_flask.dat};
      \addplot table[x=k, y=cone] {tables/e3_flask.dat};
      \addplot table[x=k, y=self] {tables/e3_flask.dat};
      \addplot table[x=k, y=selfmissed] {tables/e3_flask.dat};
    \nextgroupplot[title={Gin (\EOneGinModules{} modules)}]
      \addplot table[x=k, y=repo] {tables/e3_gin.dat};
      \addplot table[x=k, y=cone] {tables/e3_gin.dat};
      \addplot table[x=k, y=self] {tables/e3_gin.dat};
      \addplot table[x=k, y=selfmissed] {tables/e3_gin.dat};
    \nextgroupplot[title={Axum (\EOneAxumModules{} modules)}]
      \addplot table[x=k, y=repo] {tables/e3_axum.dat};
      \addplot table[x=k, y=cone] {tables/e3_axum.dat};
      \addplot table[x=k, y=self] {tables/e3_axum.dat};
      \addplot table[x=k, y=selfmissed] {tables/e3_axum.dat};
    \nextgroupplot[title={FastAPI (\EOneFastapiModules{} modules)}]
      \addplot table[x=k, y=repo] {tables/e3_fastapi.dat};
      \addplot table[x=k, y=cone] {tables/e3_fastapi.dat};
      \addplot table[x=k, y=self] {tables/e3_fastapi.dat};
      \addplot table[x=k, y=selfmissed] {tables/e3_fastapi.dat};
    \nextgroupplot[title={Assay (\EOneAssayModules{} modules)}]
      \addplot table[x=k, y=repo] {tables/e3_assay.dat};
      \addplot table[x=k, y=cone] {tables/e3_assay.dat};
      \addplot table[x=k, y=self] {tables/e3_assay.dat};
      \addplot table[x=k, y=selfmissed] {tables/e3_assay.dat};
    \end{groupplot}
  \end{tikzpicture}
  \caption{E3: fraction of claims re-verified per random edit of $k$ modules (\EThreeTrials{} trials per point) under three binding strategies. Repo-hash binding re-verifies everything. Self-hash binding is cheapest but leaves the dotted fraction of required invalidations undetected. Cone binding is the sound minimum.}
  \label{fig:e3}
\end{figure*}

\begin{table}[t]
  \centering\footnotesize
  \caption{E3 at $k = 4$ edited modules: re-verification cost per strategy and the fraction of required invalidations the self-hash strategy misses.}
  \label{tab:e3}
  \resizebox{\linewidth}{!}{\begin{tabular}{lrrrrr}
\toprule
Repository & Modules & Repo hash & Cone hash & Self hash & Self hash misses \\
\midrule
express & 35 & 100\% & 16.6\% & 11.4\% & 27.8\% \\
flask & 16 & 100\% & 66.7\% & 25.0\% & 45.7\% \\
gin & 7 & 100\% & 81.9\% & 57.1\% & 29.3\% \\
axum & 90 & 100\% & 66.3\% & 4.4\% & 67.6\% \\
fastapi & 195 & 100\% & 7.9\% & 2.1\% & 22.7\% \\
assay & 5 & 100\% & 87.4\% & 80.0\% & 7.4\% \\
\bottomrule
\end{tabular}
}
\end{table}

The result is what Propositions~\ref{prop:sound} and~\ref{prop:minimal} predict, now with magnitudes (Fig.~\ref{fig:e3}, Table~\ref{tab:e3}). At four edited modules, cone binding re-verifies between \EThreeMinConeCost{} (FastAPI) and \EThreeMaxConeCost{} (Gin) of claims where repo binding re-verifies all of them, and self binding misses between \EThreeMinSelfMissed{} and \EThreeMaxSelfMissed{} of the claims that must be re-verified. The spread is structural. FastAPI's graph is wide and shallow, with hundreds of example modules depending on one package, so a random edit rarely lands in a cone. Axum's workspace crates form a deep chain, so most edits do. Neither the cheap-but-unsound nor the sound-but-total strategy is acceptable for a gate. Cone binding is the only one that is both, and its cost is a property of the repository, not the tool.

\subsection{E4: the gate against an adversarial doer}

A twelve-file Python project with a real pytest suite. Each scenario starts from a clean copy. A scripted doer edits and makes claims, honestly or not, a scripted reviewer acts, and the gate decides. Table~\ref{tab:e4} lists the scenarios and the mechanism that fired.

\begin{table*}[t]
  \centering\footnotesize
  \caption{E4: scripted doer behaviours and the gate's decision. A rejected ledger entry is treated as absent, so the forged scenario reports only the signature mechanism.}
  \label{tab:e4}
  \begin{tabular}{p{7.2cm}lp{4.6cm}}
\toprule
Scenario & Gate & Mechanisms that fired \\
\midrule
honest leaf edit (L1): tests on the radius, independent accept & PASS & none \\
honest hub edit (L3): + lint, secrets scan, human sign-off & PASS & none \\
no evidence at all & FAIL & coverage \\
evidence covers the edited module, not its dependents & FAIL & coverage \\
evidence produced before the edit (stale) & FAIL & freshness \\
doer approves own claim & FAIL & separation \\
forged ledger entries (claim + verdict) & FAIL & signature \\
code broken, real test run recorded & FAIL & exit code \\
code broken, fake test command & FAIL & plausibility \\
code broken, failing test deleted & FAIL & monotonicity \\
reviewer refutes the claim & FAIL & review \\
three rounds, escalated, human resolves & PASS & review \\
\bottomrule
\end{tabular}

\end{table*}

\EFourAllExpected{} \EFourScenarios{} scenarios end where the design says they should. The \EFourPassed{} honest paths pass: a leaf edit at level 1, a hub edit at level 3 with lint, secrets scan, and human sign-off, and an escalation resolved by a human, which was correctly blocked while escalated. The \EFourBlocked{} adversarial ones fail, each on the mechanism built for it. Two rows deserve comment. \emph{Failing test deleted} is the reward-tampering behaviour documented in the model literature~\citep{denison2024subterfuge}. The doer breaks the code, replaces the test that would catch it with a placeholder, runs a real test command that genuinely passes, and gets an accept from a reviewer who did not look. The gate still fails, because the number of tests that ran dropped from the earlier accepted evidence and no human signed off on the removal. \emph{Fake test command} passes a real reviewer for the same reason and fails on plausibility. The point of E4 is not that these tricks are hard to think of, but that the floor holds without anyone thinking of them at review time.

\subsection{E5: how large is a change}

\begin{table}[t]
  \centering\footnotesize
  \caption{E5: single-module blast radius as a fraction of all modules, and the number of modules whose edit falls at each obligation level.}
  \label{tab:e5}
  \resizebox{\linewidth}{!}{\begin{tabular}{lrrrrrrr}
\toprule
Repository & Modules & Median radius & p90 radius & L0 & L1 & L2 & L3 \\
\midrule
express & 35 & 5.7\% & 8.6\% & 13 & 22 & 0 & 0 \\
flask & 16 & 6.2\% & 87.5\% & 13 & 0 & 0 & 3 \\
gin & 7 & 57.1\% & 71.4\% & 1 & 1 & 1 & 4 \\
axum & 90 & 1.1\% & 91.1\% & 66 & 0 & 0 & 24 \\
fastapi & 195 & 0.5\% & 1.0\% & 149 & 38 & 0 & 8 \\
assay & 5 & 20.0\% & 20.0\% & 4 & 0 & 1 & 0 \\
\bottomrule
\end{tabular}
}
\end{table}

Across the external corpora, between \EFiveMinSingle{} and \EFiveMaxSingle{} of modules reach only themselves (Table~\ref{tab:e5}). The median single-module radius is \EFiveFastapiMedian{} of modules for FastAPI and \EFiveAxumMedian{} for Axum, while the 90th percentile in Axum is \EFiveAxumPninety{}, because a handful of hubs (\EFiveAxumTopModule{}, reached by \EFiveAxumTopBr{} modules) sit under everything. The obligation policy follows that shape. Most edits are light or standard and the hubs escalate. A uniform review policy spends the same effort on a documentation example as on the core package. The obligation spends it where the radius is.

\subsection{Threats to validity}

Import graphs from regular expressions under-approximate dependencies (dynamic imports, reflection, generated code), which makes cone hashes under-approximate and would weaken Proposition~\ref{prop:sound} in practice. Authoritative edge sources and a tree-sitter backend are the remedy, and edge recall should be reported per repository. Module granularity is a directory, and a large directory inflates radii. All conditions share one tokenizer. The exploration proxy stands in for agent behaviour rather than measuring it. E4 measures a floor against scripted behaviours designed by the same people who designed the checks, so independent red-teaming is warranted. Risk and the obligation thresholds are policy.

\section{Related Work}

\textbf{Context for agents.} Repository maps and packers~\citep{aider,repomix,stacklit}, code intelligence indexes such as SCIP~\citep{scip}, and structure-aware retrieval in AutoCodeRover~\citep{zhang2024autocoderover} and the deliberately minimal Agentless~\citep{xia2024agentless} all reduce the tokens an agent spends locating code. Long-context degradation~\citep{liu2024lost} motivates compactness beyond cost. None bind verification to structure.

\textbf{Agent frameworks and accountability.} Multi-agent frameworks~\citep{hong2024metagpt,qian2024chatdev,wu2023autogen,crewai} and agent-computer interfaces such as SWE-agent~\citep{yang2024sweagent} and OpenHands~\citep{wang2024openhands} shape what agents can do. Failure taxonomies~\citep{cemri2025mast} and the reward-tampering literature~\citep{sharma2024sycophancy,denison2024subterfuge} explain why outcomes need verification the agent cannot game. Liza~\citep{liza} is the closest system in spirit and the source of our protocol. Our contribution relative to it is the binding of verdicts to content, which turns ``re-review after rebase'' from procedure into arithmetic. Long-horizon measurements~\citep{kwa2025longtasks} suggest unattended runs will lengthen, raising the value of a floor that needs no attention.

\textbf{Content addressing, build systems, provenance.} Merkle trees~\citep{merkle1987} underlie git, Nix~\citep{dolstra2004nix}, and Bazel's action cache~\citep{bazel}. Mokhov et al.~\citep{mokhov2018build} give the vocabulary of verifying traces that our ledger instantiates for assertions. in-toto~\citep{intoto2019} and SLSA~\citep{slsa} attest to what ran over which inputs in a supply chain. A claim with its bindings, evidence, and signature is structurally an in-toto link moved into the inner loop, with the dependency cone as the input set. Proposition~\ref{prop:radius} is also a regression test selection oracle in the manner of Ekstazi~\citep{gligoric2015ekstazi}, with the dependency graph shared with the agent's context.

\textbf{Argumentation.} The counter and contest exchange is a Toulmin argument~\citep{toulmin1958}: evidence is the grounds, the cone binding is the warrant for this code state, a counter is a rebuttal, and when the binding drifts the warrant lapses.

\section{Limitations and Future Work}

\textbf{Parsers.} Regular-expression extraction is the largest source of unsoundness. A tree-sitter backend and, where available, authoritative edge sources (\texttt{go list}, \texttt{tsc -{}-listFiles}, \texttt{cargo metadata}) should replace it, and edge recall should be a reported property of every index.

\textbf{Granularity.} Directory modules are coarse. File-level cones are a straightforward extension. Symbol-level cones need a real call graph.

\textbf{Evidence adequacy.} The gate checks plausibility, not sufficiency. Coverage deltas on changed lines and per-test identities are the next mechanical checks, and per-test identities would make monotonicity exact.

\textbf{Key placement.} The floor depends on the signing key being outside the agent's reach. We describe the deployment pattern but do not enforce it.

\textbf{Beyond tests.} The same binding covers documentation (a \emph{documents} claim goes stale when the code it describes moves), lockfiles (\emph{no-known-vulnerabilities} bound to the lockfile hash), infrastructure plans, and monorepo federation through a dependency's version hash as a leaf. A pull request organized by claims and radius would show a reviewer what was verified, what went stale, and what is still owed.

\textbf{Models in the loop.} The study this paper does not contain measures resolved rate, tokens to first correct patch, tokens to gate pass, and the rate at which the gate blocks patches that pass a harness but fail hidden tests, across arms with and without the brief and the gate, on a benchmark such as SWE-bench Verified~\citep{jimenez2024swebench}. We have designed it and will preregister it.

\section{Conclusion}

Context and accountability for coding agents are the same problem seen at two times. Keying claims by the Merkle hash of the dependency cone of the code they cover makes staleness a hash comparison, makes blast radius and invalidation the same set, and lets a merge gate that consults no model be both precise and cheap. \assay{} is a small, dependency-free implementation of that idea, and the experiments show the floor holding on real repositories and against scripted attacks. The best summary is also the design rule: claims should decay with the code.

\section*{Acknowledgment}

The protocol and the ``mechanical, not instructional'' stance come from Liza. The committed, Merkle-skipped index and its MCP surface come from Stacklit. Assay shares no code with either project and would not exist without both. An AI coding assistant helped produce the figures and tables. The authors verified every claim and number and are responsible for the content.

\bibliographystyle{IEEEtranN}
\bibliography{refs}

\end{document}